\documentclass[a4paper]{article}


\usepackage[margin=1in]{geometry} 

\usepackage{amsmath}
\usepackage{amsthm}
\usepackage{amssymb}
\usepackage{cite}
\usepackage{amsfonts}
\usepackage{enumerate}
\usepackage{graphicx}
\usepackage{braket}
\usepackage{multirow}
\usepackage{xcolor}

\usepackage{tikz}
\usepackage{quantikz}
\usepackage{caption}
\usepackage{subcaption}

\usepackage[utf8]{inputenc}
\usepackage{hyperref}



\theoremstyle{plain}
\newtheorem{theorem}{Theorem}

\newtheorem{lemma}[theorem]{Lemma}

\theoremstyle{definition}

\newtheorem{problem}[theorem]{Problem}

\usepackage{algorithm,algorithmic}



\title{Depth Optimized Ansatz Circuit in
QAOA for Max-Cut}
\author{Ritajit Majumdar$^{1*}$ \and Debasmita Bhoumik$^1$ \and Dhiraj Madan$^2$ \and Dhinakaran Vinayagamurthy$^2$ \and Shesha Raghunathan$^3$ \and Susmita Sur-Kolay$^1$}

\date{
	$^1$Advanced Computing \& Microelectronics Unit, Indian Statistical Institute\\%
	$^2$IBM Research, India\\
	$^3$IBM Systems, India\\~\\
	$^*$majumdar.ritajit@gmail.com
}

\begin{document}
\maketitle
	
\begin{abstract}
		While a Quantum Approximate Optimization Algorithm (QAOA) is intended to provide a quantum advantage in finding approximate solutions to combinatorial optimization problems,  noise in the system is a hurdle in exploiting its full potential. Several error mitigation techniques have been studied to lessen the effect of noise on this algorithm. Recently, Majumdar et al. proposed a Depth First Search (DFS) based method to reduce $n-1$ CNOT gates in the ansatz design of QAOA  for finding Max-Cut in a graph $G = (V,E)$, $|V| = n$. However, this method tends to increase the depth of the circuit, making it more prone to relaxation error. The depth of the circuit is proportional to the height of the DFS tree, which can be $n-1$ in the worst case. In this paper, we propose an $\mathcal{O}(\Delta \cdot n^2)$ greedy heuristic algorithm, where $\Delta$ is the maximum degree of the graph, that finds a spanning tree of lower height, thus reducing the overall depth of the circuit while still retaining the $n-1$ reduction in the number of CNOT gates needed in the ansatz. We numerically show that this algorithm achieves $\simeq 10$ times increase in the probability of success for each iteration of QAOA for Max-Cut. We further show that although the average depth of the circuit produced by this heuristic algorithm still grows linearly with $n$, our algorithm reduces the slope of the linear increase from $\simeq 1$ to $\simeq 0.11$.
		
		\noindent\textbf{Keywords:} QAOA, Max-Cut, depth of circuit, CNOT
\end{abstract}

\section{Introduction}
\label{sec:intro}
Quantum Approximate Optimization Algorithm (QAOA) \cite{farhi2014quantum} is a hybrid quantum-classical algorithm \cite{cerezo2020variational}, studied primarily for finding an approximate solution to combinatorial optimization problems. A QAOA is characterized by a Problem Hamiltonian $H_P$ that encodes the combinatorial optimization problem, and a Mixer Hamiltonian $H_M$ that anti-commutes with $H_P$ and whose ground state is easy to prepare. Two parameterized unitaries $U(H_P,\gamma) = exp{(-i\gamma H_P)}$ and $U(H_M,\beta) = exp{(-i\beta H_M)}$ are applied sequentially for $p \geq 1$ times on the initial state $\ket{\psi_0}$. Here $\gamma = \{\gamma_1, \gamma_2, \hdots, \gamma_p\}$ and $\beta = \{\beta_1, \beta_2, \hdots, \beta_p\}$, $\gamma_i, \beta_i \in \mathbb{R}$ $\forall$ $i$, are the parameters. A single epoch of a 
level-$p$ QAOA is represented as in Eq.~(\ref{eq:qaoa}).
\begin{equation}
    \label{eq:qaoa}
    \ket{\psi(\gamma,\beta)} = ( \displaystyle \Pi_{l = 1}^{p} e^{(-i\beta_l H_M)} e^{(-i\gamma_l H_P)}) \ket{\psi_0}
\end{equation}

The parameters are initialized randomly, and after an epoch of iterations, that provides the expectation value $\braket{\psi(\gamma,\beta)|H_P|\psi(\gamma,\beta)}$, the parameters are updated by a classical optimizer. The next epoch uses this new set of parameters and is expected to provide an expectation value that is closer to the optimum solution to the problem.

Farhi et al. first proposed QAOA \cite{farhi2014quantum}, and studied it in the context of finding a maximum cut in a graph, known as the Max-Cut problem. For 3-regular graphs, they showed that a $p = 1$ QAOA achieves an approximation ratio better than random guessing, but lower than the best known classical algorithm \cite{goemans1995improved}. They also argued that the expectation value of the cut produced by the algorithm is a non-decreasing function of $p$. Therefore, QAOA is expected to be a potential candidate for quantum advantage using near-term devices. Many researchers, since then, have studied QAOA in the context of the Max-Cut problem \cite{crooks2018performance, guerreschi2019qaoa, larkin2020evaluation, zhu2020adaptive, yu2021quantum, majumdar2021optimizing}.

A recent experiment from Google \cite{harrigan2021quantum} showed that noise overwhelmed QAOA for Max-Cut in current quantum devices, and the expectation value produced by the algorithm decreases beyond $p = 3$. Error mitigation processes have been studied in the literature that lowers the effect of noise on QAOA, or in general on hybrid quantum-classical algorithms \cite{barron2020measurement, endo2018practical, endo2021hybrid}. Apart from error mitigation techniques, variation in the Mixer Hamiltonian \cite{zhu2020adaptive}, or the Cost Function \cite{larkin2020evaluation, barkoutsos2020improving} have been proposed that either lowers the noise in the circuit, or achieves faster convergence, thus reducing the depth, and hence the effect of decoherence, of the circuit.

In \cite{majumdar2021optimizing}, Majumdar et al. proposed a Depth First Search (DFS) based method that can eliminate $n-1$ CNOT gates in the circuit of QAOA Max-Cut for any graph $G = (V,E)$, where $|V| = n$. Since CNOT gates are one of the primary sources of error in modern quantum devices \cite{ibmquantum}, this procedure significantly reduces the noise in the circuit. As this method imposes an ordering of the edges (discussed in detail in Sec.~\ref{sec:review}), there is an increase in the depth of the circuit. The authors, however, proved that this increase in depth is overshadowed by the reduction in CNOT gates, and the overall circuit has a lower probability of error. Nevertheless, a graph has multiple DFS trees, and the depth of the circuit varies with the height of a DFS tree for the given graph. A circuit with lower depth is naturally preferable with effect of decoherence being lower.

\subsection{Contributions of this article}

It is not a trivial task to find a DFS tree for a given graph that is guaranteed to provide a low depth ansatz circuit. Further, a DFS tree with lower height does not necessarily result in a lower depth circuit (see Sec.~\ref{sec:motivation}). In this paper, we propose a greedy heuristic algorithm to lower the depth of the circuit while still eliminating  $n-1$ CNOT gates. The run-time of our proposed algorithm is $\mathcal{O}(\Delta \cdot n^2)$, where $\Delta$ is the maximum degree of the graph, and it reduces the depth of the circuit by $\simeq 84.8\%$ for graphs with number of vertices $n = 100$. We show our results on Erdos-Renyi Graphs with the probability of edge varying from 0.4 - 0.8, and complete graphs. For graphs with $4 \leq n \leq 12$, we simulate our proposed heuristic with the \textit{ibmq\_manhattan} noise model, and show more than $10$ times increase in the success probability of each iteration of QAOA for Max-Cut. The maximum height of the DFS tree (and hence the depth of the circuit) can be $n-1$, i.e., it increases linearly with the number of vertices with a slope of $\simeq 1$. We show that although the increase in depth of the circuit with the number of vertices from our method is still linear, the slope is reduced to $\simeq 0.11$.

In the rest of the paper,  Sec.~\ref{sec:review}  gives a brief review of QAOA for Max-Cut and the DFS based optimization proposed in \cite{majumdar2021optimizing}. In Sec.~\ref{sec:motivation} and Sec.~\ref{sec:cost_function} we respectively provide the conditions that lead to a low depth circuit, and a corresponding greedy heuristic algorithm to achieve a low depth. Sec.~\ref{sec:result} presents the simulation results of this proposed algorithm, and the concluding remarks appear in Sec.~\ref{sec:conclusion}.

\section{QAOA for Max-Cut and DFS based ansatz optimization}
\label{sec:review}
\subsection{QAOA for Max-Cut}
The traditional QAOA ansatz (the parameterized circuit) is composed of the two parameterized unitaries $U(H_P,\gamma)$ and $U(H_M,\beta)$. Given a graph $G = (V,E)$, with $|V| = n$, the circuit is initialized in the equal superposition of $n$ qubits. This initialization step has a depth of 1 (simultaneous operations of Hadamard gate on all qubits). Similarly, the circuit realization of $U(H_M,\beta)$ is a simultaneous operation of $R_X(\beta)$ on all the qubits.

The operator $U(H_P,\gamma)$ depends on the Problem Hamiltonian. For the Max-Cut problem, $H_P = \displaystyle \sum_{(j,k) \in E} \frac{1}{2}(I-Z_j Z_k)$. The operator $U(H_P,\gamma)$ can, therefore, be represented as in Eq.~(\ref{eq:HP}).
\begin{eqnarray}
\label{eq:HP}
U(H_P,\gamma) &=& \Pi_{(j,k) \in E} U(H_P^{(j,k)}) \nonumber \\
&=& \Pi_{(i,j) \in E} exp(-i \gamma (\frac{I-Z_j Z_k}{2}))
\end{eqnarray}

The circuit realization of the operator $U(H_P^{(j,k)})$ corresponding to an edge $(j,k)$ is shown in Fig. 1. In accordance to the nomenclature used in \cite{majumdar2021optimizing}, we call this circuit a \emph{step}. More precisely, multiple simultaneous such operators can be executed in each step corresponding to disjoint edges. For example, in Fig.~\ref{fig:2reg}, the operators $U(H_P^{(0,1)})$ and $U(H_P^{(2,3)})$ are operated on simultaneously in the same step.

\begin{figure}[H]
\centering
	\begin{quantikz}
		{q_{j}}&&\ctrl{1} & \qw & \ctrl{1} & \qw \\
		{q_{k}}&&\targ{} & \gate{R_z(2\gamma)} & \targ{} & \qw
	\end{quantikz}
	\label{fig:z_jz_k}
	\caption{Circuit realization of the operator $U(H_P^{(j,k)})$}
\end{figure}
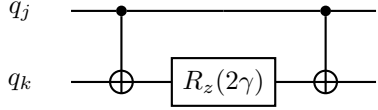

For the rest of the paper, the term \emph{QAOA} will refer to the QAOA for the Max-Cut problem where the graph $G = (V,E)$ is \emph{connected}, \emph{undirected}, and \emph{unweighted}, and $|V| = n$ and $|E| = m$. Every analysis, heuristic and algorithm in this paper will be applicable directly or with minimal changes to weighted graphs or graphs with multiple components. Furthermore, the term \textit{an edge is operated on}, or its equivalent terms, would imply the operation of the circuit of Fig. 1 for that corresponding edge.

For the sake of completeness, in Fig.~\ref{fig:2reg} we show the circuit for a $p = 1$ QAOA corresponding to a 2-regular graph with four vertices.
    

\begin{figure}[htb]
    \centering
    \includegraphics[scale=0.42]{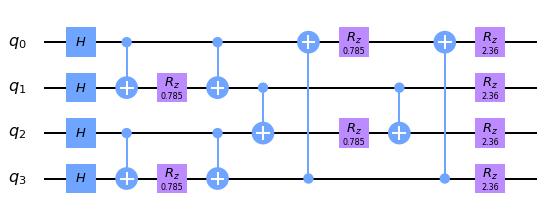}
    \caption{QAOA circuit for $p=1$  corresponding to a 2-regular graph with four vertices}
    \label{fig:2reg}
\end{figure}
    
\subsection{DFS based optimization of the ansatz circuit}
We now briefly discuss the DFS based optimization method that was proposed in \cite{majumdar2021optimizing}. Given a graph $G = (V,E)$, this method finds a DFS tree $T$ starting from a randomly chosen root vertex $r$. The DFS tree is an acyclic subgraph of $G$ containing all the $n$ vertices and $n-1$ edges. The QAOA circuit corresponding to $U(H_P,\gamma)$ can then be partitioned into two portions - the one with edges in $T$, and the other edges. The operators corresponding to the other edges, which are not included in the DFS tree, can be executed in any order once all the operators corresponding to the edges in $T$ are operated on. For each edge $(u,v) \in T$, the operator $U(H_P^{(u,v)})$ is operated on maintaining the following conditions:

\begin{enumerate}[i)]
    \item Starting from the root vertex $r$, if an edge $e_1$ appears earlier than another edge $e_2$ in the DFS tree $T$, then the operator $U(H_P^{e_1})$ must precede $U(H_P^{e_2})$ in the corresponding circuit.
    \item If an edge $e= (u,v)$ is included in $T$ and the vertex $u$ is incident on an edge already in $T$, then $u$ and $v$ must act  as the control and target of the CNOT gate respectively in the operator $U(H_P^{e})$.
\end{enumerate}

Majumdar et al. \cite{majumdar2021optimizing} proved that if these two conditions are satisfied, then the first CNOT gate for the operator in Fig. 1, associated with each edge in $T$, can be removed while still retaining functional equivalence for $p = 1$ of the QAOA circuit. This optimization method does not hold for $p > 1$. Nevertheless, for any level$-p$ QAOA, the first level can be thus optimized.
Therefore, the DFS based method can reduce the CNOT count of the overall QAOA circuit by $n-1$. 

This method mandates a sequential ordering of the tree edges, i.e., an edge in $T$ can be operated only after operators for all of its ancestors have been applied. The maximum height of a DFS tree with $n$ vertices can be $n-1$, thus leading to a significant increase in the depth of the circuit.  Fig.~\ref{fig:ring_of_disagree} shows a 2-regular cycle with $6$ vertices. In the traditional QAOA ansatz, the operator $U(H_P,\gamma)$ can be executed in $2$ steps only. As indicated with the two colors in the graph of Fig.~\ref{fig:ring_of_disagree} (a), the edges assigned the same color can be operated on in the same step. The DFS based method on the other hand (the graph of Fig.~\ref{fig:ring_of_disagree} (b)) requires $6$ steps. It is easy to verify that for a 2-regular cycle, the traditional QAOA ansatz always requires $2$ (for even $n)$ or $3$ (for odd $n$) steps only, whereas the DFS based optimized circuit requires $\mathcal{O}(n)$ steps. For large graphs, it is possible that the depth of the DFS based optimized circuit leads to an execution time greater than the coherence time of the hardware. Even when such a scenario does not occur, increase in depth makes the circuit more susceptible to decoherence. Therefore, although the authors in \cite{majumdar2021optimizing} showed that reduction in the numner of CNOT gates in the ansatz overshadows the increase in depth with respect to probability of error, a pertinent question is whether a DFS tree that reduces the depth of the circuit as well can be found efficiently.  

\begin{figure}[htb]
    \centering
    \begin{subfigure}[b]{0.45\textwidth}
        \centering
        \includegraphics[scale=0.35]{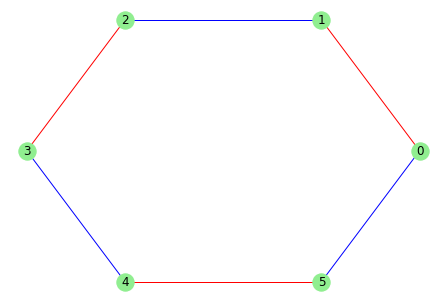}
        \caption{In the traditional QAOA ansatz edges corresponding to the same color can be operated on in the same step; the number of steps needed for this example is 2.}
    \end{subfigure}
    \hfill
    \begin{subfigure}[b]{0.45\textwidth}
        \centering
        \includegraphics[scale=0.35]{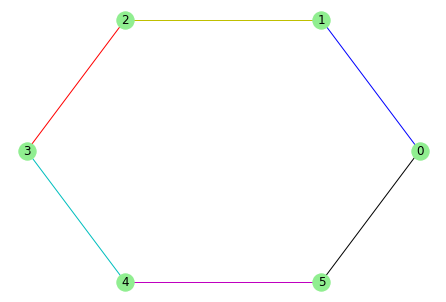}
        \caption{In DFS tree based ansatz optimization --- the number of CNOT gates can be reduced by $5$, but the  number of steps required is $6$.}
    \end{subfigure}
    \caption{Steps of operation for $U(H_P,\gamma)$ for traditional QAOA ansatz and DFS based optimized one.}
    \label{fig:ring_of_disagree}
\end{figure}

Note that this approach to optimizing the number of CNOT gates holds for any rooted spanning tree where the two conditions mentioned above, on the ordering of the tree edges, are met. DFS is merely a method to generate a rooted spanning tree of a graph. Henceforth, instead of specifically finding the DFS tree, we focus on finding a rooted spanning tree for a graph $G$ which serves the above purpose.

\section{Formulation of the Ansatz Optimization Problem}
\label{sec:motivation}
Here we focus only on optimizing for the operator $U(H_P)$ associated with each edge because each of the circuits for initialization and the Mixer Hamiltonian having a depth of one, is the same as for the traditional QAOA circuit \cite{farhi2014quantum}, or for the optimized circuit in \cite{majumdar2021optimizing}. Henceforth, when we mention a circuit, or its depth, we refer to the circuit corresponding to the operator $U(H_P)$ only. 

The maximum height of a DFS tree with $n$ vertices is $n-1$, and so is the depth of the corresponding circuit. It may be claimed that a spanning tree with lower height can lead to a circuit with lower depth. In such a case, a Breadth First Search (BFS) tree may provide a spanning tree with minimum height. However, the two trees shown in Fig.~\ref{fig:tree} (a) and (b) have different heights leading to the same depth of the circuit. In both the figures, the values associated with the edges depict the level at which the operator corresponding to that particular edge can be operated on so that the optimization (i.e. reducing the number of CNOT gates) holds. The circuit corresponding to both of the trees, shown in Fig. 5, is the same. These two trees and their corresponding circuit readily show that simply reducing the height of the tree is not sufficient to obtain a circuit of lower depth. Therefore, finding a BFS tree instead of a DFS tree does not guarantee a circuit with lower depth.

\begin{figure}[htb]
    \centering
    \begin{subfigure}[b]{0.45\textwidth}
        \centering
        \includegraphics[scale=0.35]{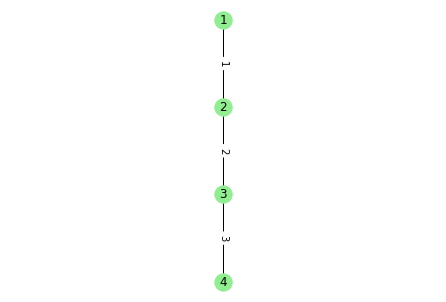}
        \caption{A spanning tree of height 3.}
    \end{subfigure}
    \hfill
    \begin{subfigure}[b]{0.45\textwidth}
        \centering
        \includegraphics[scale=0.35]{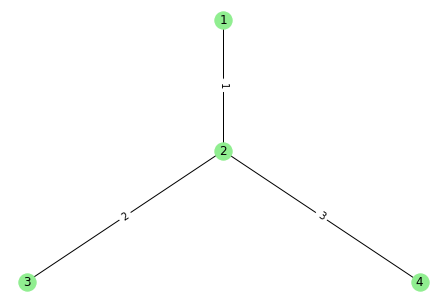}
        \caption{A spanning tree of height 2.}
    \end{subfigure}
    \caption{Two trees with different heights -- the integer label on an edge is the step at which the operator $U(H_P)$ for that edge can be operated on. The maximum value of these labels is the depth of the circuit. The heights of the trees in subfigures (a) and (b) are 3 and 2 respectively. However, both of them lead to the same circuit shown in Fig. 5}
    \label{fig:tree}
\end{figure}

\begin{figure}[htb]
\centering
	\begin{quantikz}
		{q_{1}}&& \qw & \ctrl{1} & \qw & \qw & \qw \\
		{q_{2}}&& \gate{R_z(2\gamma_l)} & \targ{} & \ctrl{1} & \qw & \qw\\
		{q_{3}}&& \gate{R_z(2\gamma_l)} & \qw & \targ{} & \ctrl{1} & \qw\\
		{q_{4}}&& \gate{R_z(2\gamma_l)} & \qw & \qw & \targ{} & \qw
	\end{quantikz}
	\label{fig:circ}
	\caption{The quantum circuit of $U(H_P,\gamma)$ corresponding to both the trees in Fig.~\ref{fig:tree}}
\end{figure}
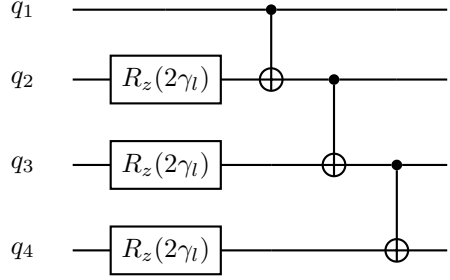

The reason why the tree on the right hand side  with height 2 in Fig.~\ref{fig:tree} cannot lower the depth of the circuit is because the CNOT gates corresponding to two adjacent edges cannot operate at the same step. Therefore, simply reducing the height of the spanning tree is not sufficient to reduce the depth of the circuit. Furthermore, in Fig.~\ref{fig:tree2}, we show two trees with the same height, but having circuits of different depth, as evident from the level number attached with the edges.

\begin{figure}[htb]
    \centering
    \begin{subfigure}[b]{0.45\textwidth}
        \centering
        \includegraphics[scale=0.35]{tree_c.png}
        \caption{An example tree of height 2.}
    \end{subfigure}
    \hfill
    \begin{subfigure}[b]{0.45\textwidth}
        \centering
        \includegraphics[scale=0.35]{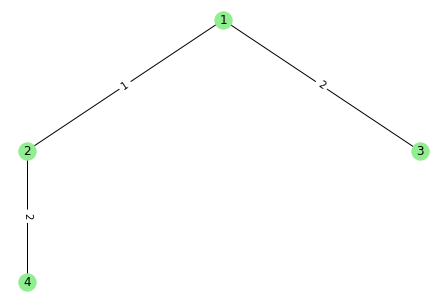}
        \caption{Another example tree of height 2.}
    \end{subfigure}
    \caption{Two trees with same height but the number of steps of the circuit corresponding to the tree in subfigure (a) is 3, while that corresponding to the tree in subfigure (b) is 2.}
    \label{fig:tree2}
\end{figure}

We now define a few terms to easily clarify the requirements for a rooted spanning tree that will lead to a circuit with lower depth.

\begin{enumerate}
    \item \textbf{Branching Factor}: The branching factor of a vertex $v$ is defined as the number of vertices that have been discovered in the rooted spanning tree from $v$.

In other words, the branching factor of a vertex $v$ is one less than the degree of that vertex in the spanning tree except for the root vertex, whose branching factor is equal to its degree. For example, in the tree of Fig.~\ref{fig:tree} (b), starting from the root labelled 1, the branching factor of the root is 1, that of vertex 2 is 2, and that of the leaf vertices are 0.

    \item \textbf{Level}: If a vertex $v$ is discovered in the rooted spanning tree from a vertex $u$, then the level of vertex $v$ = level of vertex $u$ + $1$. The level of the root vertex is $0$.

The definition of level is essentially the same as that for BFS.

    \item \textbf{Delayed Start}: Delayed start is defined as the phenomenon where the vertices $v_1, \hdots v_k$ are discovered in the spanning tree from the same vertex $v$, and belong to the same level, but the edges $(v,v_1), \hdots (v,v_k)$ have to be operated on sequentially. This is because the adjacent edges share a common vertex, and simultaneous CNOT operations are not possible with a common control or target qubit. Therefore, the operation corresponding to the edge $(v,v_i)$ is delayed as long as all the operations corresponding to the edges $(v,v_j), 1\leq j < i$ are not completed. The operator $U(H_P^{(v,v_i)})$ can be operated earliest at the level $level(v) + i$.
\end{enumerate}

We see an example of delayed start in the tree of Fig.~\ref{fig:tree} (b). Although both the leaf vertices in that tree are in the same level, they cannot be operated on simultaneously. Therefore, they must be operated on two disjoint levels. Indeed delayed start is the reason that the depth of the circuit does not reduce directly with the height of the tree. It is obvious that a tree with a higher branching factor can experience more delayed start than a tree with a lower branching factor. On the other hand, the height of the tree increases with decreasing branching factor. Therefore, we need a spanning tree starting from a root vertex $r$ that has as few delayed starts as possible. This rooted spanning tree is neither a DFS tree that has low branching factor, nor a BFS tree that has low height. It is, rather, a tree that has a trade-off between the branching factor (and hence delayed start) and the height.

\subsection{Conjecture: The problem is NP-Complete}


We have seen that simply finding a rooted spanning tree with minimum height is not sufficient to reduce the depth of the corresponding circuit. What we need instead is to have edges that can be executed in parallel. This is similar to the Edge Coloring problem \cite{west2001introduction}. Edges having the same color are disjoint and can be executed in parallel. This method was exploited in \cite{majumdar2021optimizing} as well. However, the problem here is more constrained than the Edge Coloring problem. In a tree, the same color can be used for edges in alternate levels, i.e., it is possible to have edges of same color in level 1, level 3 etc., and in level 2, level 4 etc. For example, in Fig.~\ref{fig:tree} (a), normal Edge Coloring assigns the same color to the edges in levels 1 and  3 since they are disjoint. However, we have already discussed that operators corresponding to these two edges cannot execute in parallel. Therefore, we have an added constraint that an edge cannot be given the color of any of its ancestors. We formally define the problem as follows:


\begin{problem}
Given a graph $G = (V,E)$, starting from a root vertex $r$ find a spanning tree $T$ of $G$  whose edges can be colored with the minimum number of colors satisfying the conditions that

\begin{enumerate}[i)]
    \item No two edges incident on a common vertex have same color.
    \item No edge has same color as that of any of its ancestors.
\end{enumerate}
\end{problem}

Optimal Edge Coloring problem is itself an NP-Complete problem, and Problem 1 has additional constraints. In other words, we want that the degrees of the vertices in the spanning tree are not very large in order to avoid delayed start. Edges which are suffering from delayed start due to the rooted spanning tree ordering cannot have the same colors, and will thus increase the required number of colors. In \cite{rahman2005complexities}, the authors showed that finding a degree constrained spanning tree, i.e., a spanning tree where the degree of any vertex is upper bounded by a predefined value, in NP-Complete.

Hence, we \textit{conjecture} that Problem 1 is NP-Complete, and propose a greedy polynomial time algorithm to find a \textit{better} solution instead of the vanilla flavour depth-first search based method.

\section{Proposed Cost Function and Algorithm}
\label{sec:cost_function}
We propose a cost function that respects the following observations:

\begin{enumerate}[i)]
    \item If the branching factors of the vertices are very high, then the corresponding circuit will suffer from \emph{delayed start}, leading to an increase in the depth. On the other hand, if the branching factor of the vertices are very low, then the height of the tree, and hence of the depth of the circuit, will increase.
    
    \item Between two vertices $u$ and $v$, it is better to branch the one at a lower level of the tree so that the edges in that branch may still have some opportunity to be executed in parallel with other edges at a higher level even after \emph{delayed start}. An example of this is shown in Fig.~\ref{fig:tree2} where both the trees have the same height, but the tree of Fig.~\ref{fig:tree2} (b) will lead to a circuit with lower depth since the branching is closer to the root.
    
    \item For graphs with fewer vertices, the branching factor should be low in order to avoid increase in depth due to \emph{delayed start}. However, as the number of vertices increases, a higher branching factor must be allowed to lower the height of the tree.
\end{enumerate}

Respecting all the three criteria stated above, we propose a cost function $C_v$ to be associated with every vertex $v$. Let $n$ be the number of vertices in the graph, $l_v$ and $v_{bf}$ be the level and the current branching factor of the vertex $v$ respectively, and $B$ be the maximum branching factor decided for any vertex in the spanning tree, then
\begin{equation}
    \label{eq:cost_func}
    C_v = (n - l_v) \cdot (B - v_{bf})
\end{equation}

When growing the spanning tree from a root vertex, the edge $(v,w)$ for which the cost function $C_v$ is maximum, is added to the tree. Note here that for a new edge $(v,w)$, the cost function does not depend on the vertex $w$, but rather on the vertex $v$ from which this edge is discovered (the algorithm is provided later on).

The term $(n - l_v)$ is always positive. On the other hand,
\begin{equation*}
  B - v_{bf}
    \begin{cases}
      > 0 & \text{if~} v_{bf} < B\\
      = 0 & \text{if~} v_{bf} = B\\
      < 0 & \text{if~} v_{bf} > B.
    \end{cases}       
\end{equation*}

Our proposed algorithm avoids branching at a vertex for which $v_{bf} \geq B$. In fact, when $v_{bf} = B$, the cost function has a contribution of $0$. Therefore, in our result section, we take $B = f+1$ if we want a maximum branching factor of $f$ in the spanning tree. Furthermore, the term $(n - l_v)$ is higher for the vertices with lower $l_v$. Thus, if for two vertices $u \neq v$, $v_{bf} = u_{bf} < B$, the algorithm  chooses to branch that vertex which has a lower level. This ensures that \emph{delayed start} is closer to the root, so that those branches still have some opportunity for parallel execution with some higher level branches. Furthermore, if $v_{bf} > B$, the product with $(n - l_v)$ leads to significantly low values for low $l_v$. This discourages branching more than $B$ in lower levels of the tree strongly to prevent excessive \emph{delayed start} (like in a BFS tree). In other words, the spanning tree generated by this heuristic cost function is neither a BFS nor a DFS one, but rather an intermediate one. Algorithm~\ref{alg:heuristic}  presents how to generate a spanning tree that has a trade-off between the height of the tree and the branching factor.

\begin{algorithm}
\caption{Cost Function Based Rooted Spanning Tree Generation}
\label{alg:heuristic}
\begin{algorithmic}[1]
\REQUIRE A Graph $G = (V,E)$, $|V| = n$, $|E| = m$; maximum branching factor $B$.
\ENSURE A Rooted Spanning Tree $T$ of the Graph $G$.
\STATE $T = \{\}$.
\STATE $u_{bf} \leftarrow 0$ for all vertex $u$.
\STATE $r \leftarrow$ randomly selected start vertex.
\STATE Visited = $\{r\}$.
\STATE $r_{bf} = r_{bf} + 1$.
\STATE edges\_to\_add = neigh(r).
\WHILE{$|Visited| < n$}
\STATE e = edges\_to\_add[0].
\STATE c = $0$.
\FORALL{$edge = (u,v) \in edges\_to\_add$}
\STATE cost = $(n-l_u) \cdot (B - u_{bf})$.
\IF{$cost > c$}
\STATE c = cost.
\STATE e = edge.
\ENDIF
\ENDFOR
\STATE $T = T \cup \{e\}$.
\STATE Visited = Visited $\cup$ $\{y\}$, where $e = (x,y)$.
\STATE $x_{bf} = x_{bf} + 1$.
\STATE Remove all edges of the form $(*,q)$ from edges\_to\_add.
\FORALL{$edge = (p,q) \in neigh(y)$}
\IF{$q \notin Visited$}
\STATE edges\_to\_add = edges\_to\_add $\cup$ $\{edge\}$.
\ENDIF
\ENDFOR
\ENDWHILE
\end{algorithmic}
\end{algorithm}

\begin{lemma}
Algorithm~\ref{alg:heuristic} finds a rooted spanning tree in $\mathcal{O}(\Delta \cdot n^2)$ time for a graph with $n$ vertices and maximum degree $\Delta$ which satisfies the conditions in Problem 1.
\end{lemma}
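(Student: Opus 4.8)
The plan is to prove the statement in two independent parts: (a) \emph{correctness}, that the algorithm halts and returns a rooted spanning tree of $G$ that is a feasible candidate for Problem~1; and (b) the \emph{time bound} $\mathcal{O}(\Delta \cdot n^2)$. For (a), I would argue by a loop invariant on the \texttt{while} loop. The invariant I want to maintain is: at the start of each iteration, the pair $(\mathrm{Visited}, T)$ is a rooted tree spanning exactly the vertices in \emph{Visited}, and the candidate list \texttt{edges\_to\_add} contains precisely the \emph{frontier} edges of $G$, i.e.\ those with one endpoint in \emph{Visited} and the other outside it. The invariant holds at entry ($\mathrm{Visited} = \{r\}$, $T = \emptyset$, and the frontier is the set of edges incident on $r$). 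Each iteration selects a frontier edge $e = (x,y)$ with $x \in \mathrm{Visited}$, $y \notin \mathrm{Visited}$, appends it to $T$, marks $y$ visited, deletes from \texttt{edges\_to\_add} the edges into $y$ that have just become internal, and inserts the newly exposed edges out of $y$; checking that these three updates exactly restore the frontier re-establishes the invariant.

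Since $G$ is connected, the frontier is nonempty whenever $\mathrm{Visited} \neq V$, so the loop always has an edge to choose, makes progress by adding one fresh vertex per iteration, and terminates after exactly $n-1$ iterations. The returned $T$ therefore has $n$ vertices and $n-1$ edges, and because every added edge joins the current tree to a vertex not previously in it, $T$ contains no cycle and is connected — hence a spanning tree rooted at $r$. This is exactly the object Problem~1 requires as input to the coloring, and the level-indexed edge ordering introduced in Section~\ref{sec:motivation} (assigning each tree edge the step at which its operator fires, respecting \emph{delayed start}) supplies a concrete coloring that meets conditions (i) and (ii); thus $T$ satisfies the conditions of Problem~1. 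I would stress that \emph{optimality} of the color count is not asserted here — that is precisely what the NP-completeness conjecture concerns — so feasibility is all that must be shown.

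For (b), I would charge the running time to the inner scan (the \texttt{for-all} loop, Lines 10--16), which visits every edge of \texttt{edges\_to\_add} once to locate the maximum-cost candidate. The quantitative heart of the argument is a size bound on the frontier: since \texttt{edges\_to\_add} is at all times a subset of $E$, and $m = |E| \le \Delta n / 2$, we have $|\texttt{edges\_to\_add}| = \mathcal{O}(\Delta n)$ throughout. Each cost evaluation $(n - l_u)(B - u_{bf})$ is $\mathcal{O}(1)$ given stored levels and branching factors (the level $l_y = l_x + 1$ is set in $\mathcal{O}(1)$ when $y$ is added), so one pass of the scan costs $\mathcal{O}(\Delta n)$. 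The removal step in Line~20 also scans the candidate list, costing $\mathcal{O}(\Delta n)$, while inserting $y$'s frontier edges touches at most $\deg(y) \le \Delta$ neighbours and costs $\mathcal{O}(\Delta)$. Summing the dominant $\mathcal{O}(\Delta n)$ per-iteration cost over the $n-1$ iterations gives $\mathcal{O}(\Delta n \cdot n) = \mathcal{O}(\Delta \cdot n^2)$, as claimed.

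The main obstacle I anticipate is making the frontier invariant airtight, since the whole time bound rests on the identity ``\texttt{edges\_to\_add} $=$ current frontier'': I would need to rule out that a vertex ever enters the tree twice and to verify that the removal in Line~20 purges \emph{every} edge that ceases to be a frontier edge when $y$ becomes visited — note that as written the clause refers to edges $(*,q)$, which must be read as all edges into the just-visited vertex $y$, otherwise stale edges could inflate later scans or, worse, be selected for $T$. Once this invariant is pinned down, the acyclicity and spanning properties of (a) and the $\mathcal{O}(\Delta n)$ frontier-size bound underlying (b) both follow immediately, and the remaining per-line time accounting is routine.
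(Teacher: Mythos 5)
Your proof is correct, and its time--bound half is essentially the same argument as the paper's: both charge the running time to scanning the candidate list for the maximum-cost edge and sum this over the $n-1$ iterations. The only difference is how the list size is bounded. The paper tracks it incrementally --- the root contributes at most $\Delta$ candidate edges and every later vertex at most $\Delta-1$ (its tree parent being excluded), so the scan at step $i$ costs at most $i\Delta-(i-1)$, and summing this arithmetic series gives $\mathcal{O}(\Delta\cdot n^2)$ --- whereas you use the uniform bound $|\texttt{edges\_to\_add}|\le |E|\le \Delta n/2$ in every iteration. Both yield the stated bound with equally routine accounting. The substantive difference is that you also prove \emph{correctness}: the frontier loop invariant, termination after exactly $n-1$ iterations by connectivity, acyclicity and the spanning property, and the observation that the level-indexed ordering furnishes a \emph{feasible} (not necessarily minimal) coloring, which is the right reading of ``satisfies the conditions in Problem 1.'' The paper's proof contains none of this --- it is purely a time-complexity calculation and implicitly takes for granted that the output is a rooted spanning tree meeting Problem 1's constraints --- so your version is more faithful to what the lemma actually asserts. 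Your flagged reading of Line 20, namely that $(*,q)$ must mean all candidate edges into the newly visited vertex $y$, is indeed the intended semantics and is exactly what makes the frontier invariant (and hence both halves of the argument) go through.
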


\begin{proof}
Let $r$ be the randomly chosen root vertex of the spanning tree. Therefore, the choice of root does not require any computational time. Since $\Delta$ is the maximum degree of the graph, $r$ can have at most $\Delta$ neighbours. Finding the maximum cost function among these neighbours require $\mathcal{O}(\Delta)$ time. Subsequent vertices in the spanning tree can have at most $\Delta - 1$ neighbours since one of its neighbour must be its parent in the spanning tree.  Therefore, the total time requirement in all the steps is
\begin{eqnarray*}
W &\leq& \Delta ~~\text{(to create the spanning tree upto two vertices)}\\
& \leq & \Delta + (\Delta - 1) ~~\text{(to create the spanning tree upto three vertices)}\\
& \leq & 3\Delta - 2 ~~\text{(to create the spanning tree upto four vertices)}\\
& & \vdots
\end{eqnarray*}

Therefore,
\begin{eqnarray*}
W &\leq& \displaystyle \sum_{i=1}^{n-1} (i \cdot \Delta - (i-1))\\
& = & \displaystyle \Delta \cdot \sum_{i=1}^{n-1} i - \displaystyle \sum_{i=1}^{n-1} (i-1)\\
&=& \mathcal{O}(\Delta \cdot n^2)
\end{eqnarray*}
\end{proof}

For sparse graphs, $\Delta = \mathcal{O}(1)$ and for dense graphs $\Delta = \mathcal{O}(n)$. Therefore, the time complexity of the proposed Algorithm~\ref{alg:heuristic} varies between $\mathcal{O}(n^2)$ to $\mathcal{O}(n^3)$ depending on the sparsity of the given graph.

\subsection{An Illustration of Algorithm 1}

We illustrate the DFS method \cite{majumdar2021optimizing} and our proposed method in action on an example graph given in Fig.~\ref{fig:ex_graph}. First, in Fig.~\ref{fig:traditional} we show the traditional $p = 1$ QAOA circuit for this graph. Then, in Fig.~\ref{fig:twotrees} we give two spanning trees of the graph. The spanning tree in Fig.~\ref{fig:twotrees} (a) is generated using the DFS method \cite{majumdar2021optimizing}, whereas the one in Fig.~\ref{fig:twotrees} (b) is generated using Algorithm~\ref{alg:heuristic} with $B = 3$. In Fig. 10 (a) and (b) we show the optimized circuits for the $p = 1$ QAOA of the graph in Fig.~\ref{fig:ex_graph}, where the optimized circuits are generated using the DFS method \cite{majumdar2021optimizing} and Algorithm~\ref{alg:heuristic} respectively. The values of $\gamma$ and $\beta$ are randomly selected.

\begin{figure}[htb]
    \centering
    \includegraphics[scale=0.35]{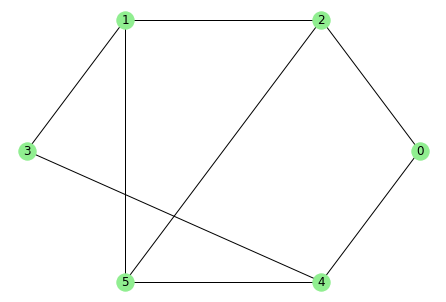}
    \caption{An example graph with 6 vertices}
    \label{fig:ex_graph}
\end{figure}


\begin{figure}[htb]
    \centering
    \includegraphics[scale=0.35]{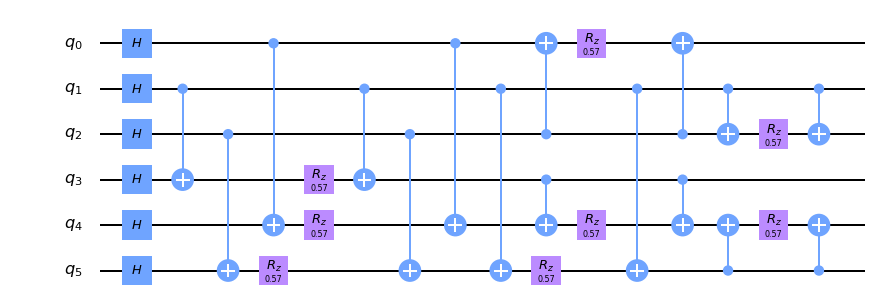}
    \caption{Traditional $p=1$ QAOA circuit corresponding to $U(H_P,\gamma)$ for the graph in Fig.~\ref{fig:ex_graph}}
    \label{fig:traditional}
\end{figure}

\begin{figure}[htb]
    \centering
    \begin{subfigure}[b]{0.45\textwidth}
        \centering
        \includegraphics[scale=0.35]{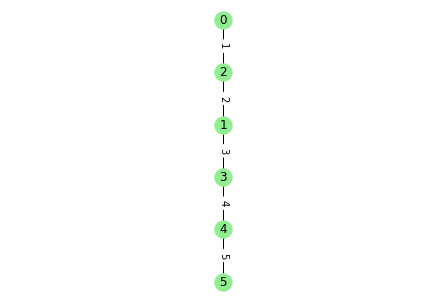}
        \caption{A spanning tree tree of the graph in Fig.~\ref{fig:ex_graph}}
    \end{subfigure}
    \hfill
    \begin{subfigure}[b]{0.45\textwidth}
        \centering
        \includegraphics[scale=0.35]{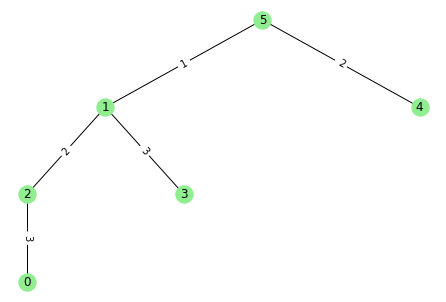}
        \caption{Another spanning tree tree of the graph in Fig.~\ref{fig:ex_graph}}
    \end{subfigure}
    \caption{Two spanning trees of the graph in Fig.~\ref{fig:ex_graph}. The left graph is generated using the DFS method \cite{majumdar2021optimizing}, and the right one is generated using Algorithm~\ref{alg:heuristic} with $B = 3$.}
    \label{fig:twotrees}
\end{figure}


\begin{figure}[htb]
    \centering
     \begin{subfigure}[b]{0.47\textwidth}
         \centering
         \includegraphics[scale=0.3]{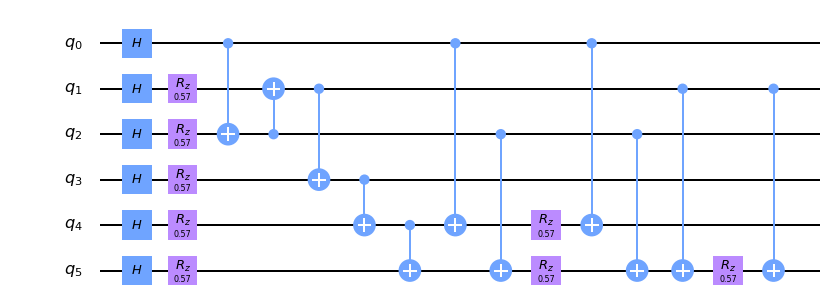}
         \caption{Optimized circuit using DFS method}
     \end{subfigure}
     \hfill
     \begin{subfigure}[b]{0.47\textwidth}
         \centering
         \includegraphics[scale=0.3]{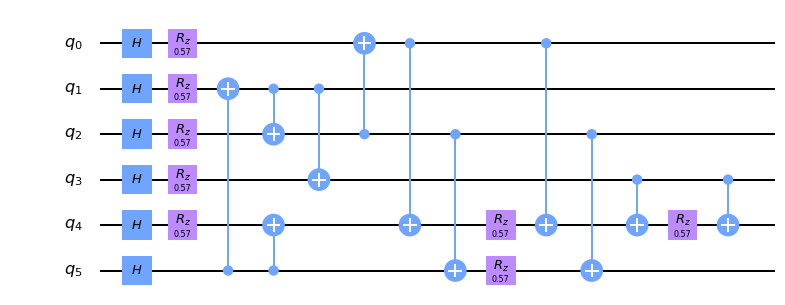}
         \caption{Optimized circuit using proposed Algorithm 1}
     \end{subfigure}
     \caption{Optimized $p = 1$ QAOA circuit corresponding to $U(H_P,\gamma)$ for the two spanning trees in Fig.~\ref{fig:twotrees} respectively}
     \label{fig:circuits}
\end{figure}

The depth of the circuits in Fig.~\ref{fig:traditional}, Fig.~\ref{fig:circuits} (a) and Fig.~\ref{fig:circuits} (b) are 11, 14 and 12 respectively as obtained using the \textit{.depth()} function of Qiskit \cite{Qiskit}. The number of CNOT gates in both the optimized circuits in Fig.~\ref{fig:circuits} are 5 less than that in Fig.~\ref{fig:traditional}. We note that the depth of both the optimized circuits are greater than that of the traditional QAOA. However, the optimized circuit in Fig.~\ref{fig:circuits} (b) can be considered to be superior since it requires 5 CNOT gates fewer than that in Fig.~\ref{fig:traditional}, as well as increases the depth by 1 only. In Sec.~\ref{sec:result}, we show that our proposed Algorithm~\ref{alg:heuristic} can significantly arrest the increase in depth, and in some cases can lead to a lower depth than its traditional circuit. 

\section{Results of simulation}
\label{sec:result}

\subsection{Reduction in the depth of the circuit}
The entire circuit of $U(H_P,\gamma)$ can be divided into two disjoint parts - one corresponding to the edges in the spanning tree, followed by the other edges in the input graph. Our algorithm can reduce the depth of the circuit corresponding to the spanning tree only. The circuit for the unoptimized edges remains the same as in \cite{majumdar2021optimizing}. Furthermore, the initialization, and the Mixer Hamiltonian is the same for the traditional QAOA circuit \cite{farhi2014quantum} or the circuit proposed in \cite{majumdar2021optimizing}, and our proposed optimized circuit. Therefore, here we compare the depth of the circuit corresponding to the spanning tree only.

When executing a circuit in a hardware, the graph has to be mapped to the underlying hardware connectivity graph. This process is called transpilation. All the results in this section are generated after transpiling the original circuit in the \emph{ibmq\_manhattan} connectivity graph using the \textit{transpilation} procedure of qiskit \cite{Qiskit} with \textit{optimization\_level = 3}.

In the worst case, the height of the DFS tree, and hence the depth of the corresponding circuit, can be as large as $n-1$. In Fig.~\ref{fig:depth} (a)-(d) we show the reduction in depth of of the circuit of the spanning tree by our proposed algorithm compared to the worst case depth of the circuit corresponding to the maximum height of the DFS tree. Fig.~\ref{fig:depth} (a)-(d) show the reduction in depth for Erdos-Renyi graphs with $p_{edge}$, the probability of an edge, varying from 0.4  to 0.8, and complete graphs. For each type of graph, we vary $n$, the number of vertices from 20 to 100, and the value of the depth corresponding to each $n$ is an average over $80$ graph instances. The graph instances are same for all the values of $B$. The graphs in Fig.~\ref{fig:depth} and later in Fig.~\ref{fig:success} are averaged over all the possible $n$ spanning trees generated by selecting each of the $n$ vertices once as the root.

For all the types of graphs considered, we observe higher reduction in the depth for a higher value of $B$ as the number of vertices increases. This is at par with our reasoning earlier, that for larger graphs, it is better to allow higher values of $B$. We have shown results for $B = 3, 6$ and $10$ only. For $B = 10$ and $n = 100$, the reduction in the depth is $\simeq 84.8\%$. It is evident from the graphs that the depth decreases with increasing $B$ as the number of vertices increases. So for larger graphs, one should opt for even higher values of $B$.

\begin{figure}[htb]
     \centering
     \begin{subfigure}[b]{0.48\textwidth}
         \centering
         \includegraphics[scale=0.5]{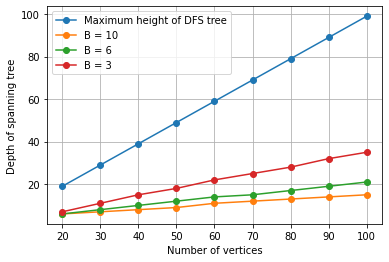}
         \caption{Erdos-Renyi Graphs with $p_{edge} = 0.4$}
     \end{subfigure}
     \hfill
     \begin{subfigure}[b]{0.48\textwidth}
         \centering
         \includegraphics[scale=0.5]{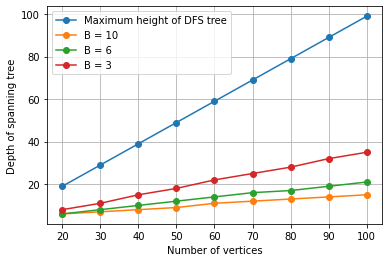}
         \caption{Erdos-Renyi Graphs  with $p_{edge} =  0.6$}
     \end{subfigure}
     \newline
     \begin{subfigure}[b]{0.47\textwidth}
         \centering
         \includegraphics[scale=0.5]{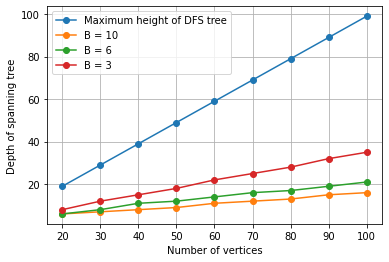}
         \caption{Erdos-Renyi Graphs  with $p_{edge} = 0.8$}
     \end{subfigure}
     \hfill
     \begin{subfigure}[b]{0.47\textwidth}
         \centering
         \includegraphics[scale=0.5]{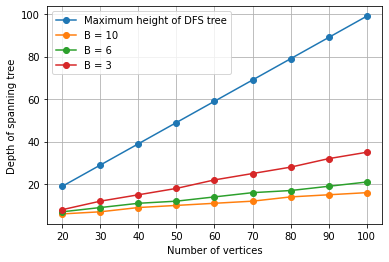}
         \caption{Complete graph}
     \end{subfigure}
    \caption{Depth of the circuit for different values of B: Erdos Renyi Graphs ($p_{edge} = 0.4,~0.6,~ 0.8$) and complete graphs}
    \label{fig:depth}
\end{figure}

We observe from the graphs in Fig~\ref{fig:depth} that the increase in the depth with $n$ for various values of $B$ is still linear. In the worst case, where the depth is $n-1$ for a graph with $n$ vertices, the slope is $\simeq 1$. In Table~\ref{tab:slope} we show the slopes of the curves for $B = 3$, $6$ and $10$ for each of the graph family considered. From the values it is evident that the slope corresponding to the increase in depth is lowered by $\simeq \frac{1}{10}$ as the value of $B$ increases.

\begin{table}[htb]
    \centering
    \caption{Variation in the slope of the increase in depth with $n$ for different values of $B$}
    \begin{tabular}{|c|c|c|c|}
    \hline
        Graph Family & $B = 3$ & $B = 6$ & $B = 10$ \\
        \hline
        Erdos Renyi ($p_{edge} = 0.4$) & 0.35 & 0.1875 & 0.1125\\
        \hline
        Erdos Renyi ($p_{edge} = 0.6$) & 0.35 & 0.1875 & 0.1125\\
        \hline
        Erdos Renyi ($p_{edge} = 0.8$) & 0.3375 & 0.1875 & 0.125\\
        \hline
        Complete graph & 0.3375 & 0.1875 & 0.125\\
        \hline
    \end{tabular}
    \label{tab:slope}
\end{table}

\subsection{Increase in the Probability of Success}

QAOA consists of executing the same circuit with the same parameters multiple times to obtain an expectation value of the cut. The performance of the algorithm is determined by this expectation value of the obtained cut. However, since our QAOA circuit, and the QAOA circuit in \cite{majumdar2021optimizing} are functionally equivalent to the traditional QAOA circuit, the performance remains unchanged. In this paper, we define success in a different way. For each iteration of the algorithm, let $\ket{\psi}$ denote the ideal state vector obtained via noiseless simulation.  As real-world quantum devices are noisy, let $\ket{\psi_e}$ denote the noisy outcome obtained via noisy simulation. We define the probability of success $P_{success} = |\braket{\psi|\psi_e}|^2$. For a graph with $n$ vertices, the optimization proposed in \cite{majumdar2021optimizing} reduced the number of CNOT gates by at most $n-1$. Since a CNOT gate is one of the most acute sources of error, the method improved $P_{success}$. However, this improvement has the overhead of increase in the depth, which exposes the circuit to more decoherence. In this paper, we have retained the $n-1$ reduction in the number of CNOT gates needed for the operator $U(H_P)$ in the ansatz and have also arrested the increase in depth to a bare minimum (refer Fig.~\ref{fig:depth} (a)-(d)). This leads to a further improvement in $P_{success}$.

\begin{figure}[htb]
     \centering
     \begin{subfigure}[b]{0.48\textwidth}
         \centering
         \includegraphics[scale=0.5]{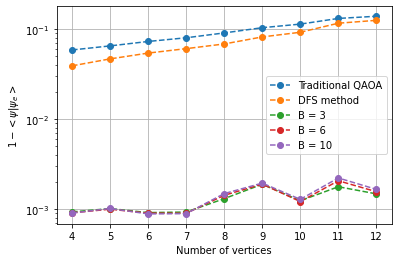}
         \caption{Erdos-Renyi graphs with $p_{edge} = 0.4$}
     \end{subfigure}
     \hfill
     \begin{subfigure}[b]{0.48\textwidth}
         \centering
         \includegraphics[scale=0.5]{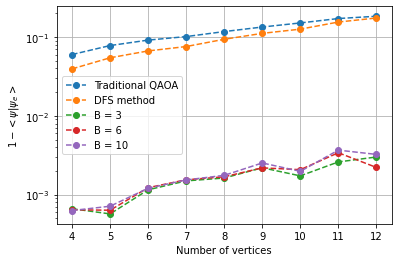}
         \caption{Erdos-Renyi graphs with $p_{edge} = 0.6$}
     \end{subfigure}
     \newline
     \begin{subfigure}[b]{0.47\textwidth}
         \centering
         \includegraphics[scale=0.5]{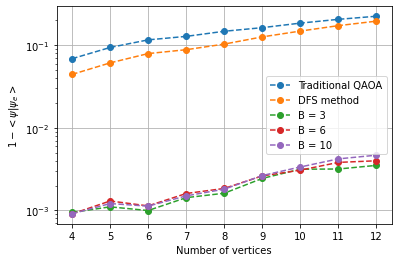}
         \caption{Erdos-Renyi graphs  with $p_{edge} = 0.8$}
     \end{subfigure}
     \hfill
     \begin{subfigure}[b]{0.47\textwidth}
         \centering
         \includegraphics[scale=0.5]{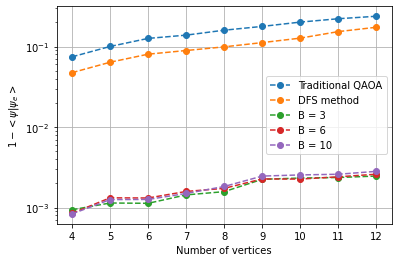}
         \caption{Complete graphs}
     \end{subfigure}
    \caption{$1-P_{success}$ for Erdos Renyi Graphs ($p_{edge} = 0.4,~0.6,~ 0.8$) and complete graphs}
    \label{fig:success}
\end{figure}

The plots corresponding to $P_{success}$ is somewhat indecipherable because the decrease in $P_{success}$ with increasing $n$ is significantly less in our method than the traditional QAOA or the optimization of \cite{majumdar2021optimizing}. Thus the changes are not easily observable in a plot for $P_{success}$. Therefore, in Fig.~\ref{fig:success} (a)-(d) we plot $(1 - P_{success}) = 1 - |\braket{\psi|\psi_e}|^2$ for the four graph families using the $ibmq\_manhattan$ noise model for the noisy simulation. We show that our proposed method decreases $(1 - P_{success})$ by more than 10 times as compared to the traditional QAOA or the optimization in \cite{majumdar2021optimizing} for Erdos-Renyi graphs with $0.4 \leq p_{edge} \leq 0.8$ and complete graph.

We observe that in Fig.~\ref{fig:success} (a)-(d), it is not evident that any betterment is achieved by increasing the value of $B$. However, this is simply because for graphs with low values of $n$, the maximum degrees are low as well. This is supported by similar results observed for low values of $n$ in Fig.~\ref{fig:depth} (a)-(d) as well. However, we see in those plots that as $n$ increases, increasing the value of $B$ leads to a better result. So we expect to see better results for $P_{success}$ as well with increasing value of $B$ as the number of vertices is increased further.

\section{Conclusion}
\label{sec:conclusion}

In \cite{majumdar2021optimizing} the authors showed that $n-1$ CNOT gates can be omitted from the ansatz circuit of QAOA for Max-Cut for an $n$ vertex graph if a DFS based ordering is followed while constructing the circuit. However, this led to an increase in the depth of the resulting circuit. In this paper, we have proposed a polynomial time heuristic algorithm that can find a rooted spanning tree such that the reduction in the number of CNOT gates is retained while significantly arresting the increase in depth. Our method is able to reduce the increase in depth by $\simeq \frac{1}{10}$ as compared to the circuit in \cite{majumdar2021optimizing}. This, in its turn, leads to a significant increase in the success probability of the algorithm, since (i) the reduction in CNOT gates is retained, and (ii) the increase in depth is lowered thus making the circuit less susceptible to relaxation error. Our proposed heuristic for circuit synthesis is thus expected to provide a novel scheme for mitigating the effect of error in QAOA for Max-Cut, and can be used together with other error mitigation scheme.

Our results show that higher branching factor is better to arrest the increase in depth as the number of vertices increases. A future prospect can be to study an approximate value of the branching factor for a particular $n$ such that the increase in depth is minimum.

\section*{Acknowledgement}

We acknowledge the use of IBM Quantum services for this work. The views expressed are those of the authors, and do not reflect the official policy or position of IBM or the IBM Quantum team. In this paper we have used the noise model of \textit{ibmq\_manhattan}, which is one of IBM Quantum Hummingbird r2 Processors. 

\bibliographystyle{unsrt}
\bibliography{main}

\begin{thebibliography}{10}

\bibitem{farhi2014quantum}
E~Farhi, J~Goldstone, and S~Gutmann.
\newblock A quantum approximate optimization algorithm.
\newblock {\em arXiv preprint arXiv:1411.4028}, 2014.

\bibitem{cerezo2020variational}
M~Cerezo, A~Arrasmith, R~Babbush, S~Benjamin, S~Endo, K~Fujii, J~McClean,
  K~Mitarai, X~Yuan, L~Cincio, et~al.
\newblock Variational quantum algorithms.
\newblock {\em arXiv preprint arXiv:2012.09265}, 2020.

\bibitem{goemans1995improved}
M~Goemans and D~Williamson.
\newblock Improved approximation algorithms for maximum cut and satisfiability
  problems using semidefinite programming.
\newblock {\em Journal of the ACM (JACM)}, 42(6):1115--1145, 1995.

\bibitem{crooks2018performance}
G~Crooks.
\newblock Performance of the quantum approximate optimization algorithm on the
  maximum cut problem.
\newblock {\em arXiv preprint arXiv:1811.08419}, 2018.

\bibitem{guerreschi2019qaoa}
G~Guerreschi and A~Matsuura.
\newblock Qaoa for max-cut requires hundreds of qubits for quantum speed-up.
\newblock {\em Scientific reports}, 9(1):1--7, 2019.

\bibitem{larkin2020evaluation}
J~Larkin, M~Jonsson, D~Justice, and G~Guerreschi.
\newblock Evaluation of quantum approximate optimization algorithm based on the
  approximation ratio of single samples.
\newblock {\em arXiv e-prints}, pages arXiv--2006, 2020.

\bibitem{zhu2020adaptive}
L~Zhu, H~Tang, G~Barron, F~Calderon-Vargas, N~Mayhall, E~Barnes, and
  S~Economou.
\newblock An adaptive quantum approximate optimization algorithm for solving
  combinatorial problems on a quantum computer.
\newblock {\em arXiv preprint arXiv:2005.10258}, 2020.

\bibitem{yu2021quantum}
Y~Yu, C~Cao, C~Dewey, X~Wang, N~Shannon, and R~Joynt.
\newblock Quantum approximate optimization algorithm with adaptive bias fields.
\newblock {\em arXiv preprint arXiv:2105.11946}, 2021.

\bibitem{majumdar2021optimizing}
R~Majumdar, D~Madan, D~Bhoumik, D~Vinayagamurthy, S~Raghunathan, and
  S~Sur-Kolay.
\newblock Optimizing ansatz design in qaoa for max-cut.
\newblock {\em arXiv preprint arXiv:2106.02812}, 2021.

\bibitem{harrigan2021quantum}
M~Harrigan, K~Sung, M~Neeley, K~Satzinger, F~Arute, K~Arya, J~Atalaya,
  J~Bardin, R~Barends, S~Boixo, et~al.
\newblock Quantum approximate optimization of non-planar graph problems on a
  planar superconducting processor.
\newblock {\em Nature Physics}, 17(3):332--336, 2021.

\bibitem{barron2020measurement}
G~Barron and C~Wood.
\newblock Measurement error mitigation for variational quantum algorithms.
\newblock {\em arXiv preprint arXiv:2010.08520}, 2020.

\bibitem{endo2018practical}
S~Endo, S~Benjamin, and Y~Li.
\newblock Practical quantum error mitigation for near-future applications.
\newblock {\em Physical Review X}, 8(3):031027, 2018.

\bibitem{endo2021hybrid}
S~Endo, Z~Cai, S~Benjamin, and X~Yuan.
\newblock Hybrid quantum-classical algorithms and quantum error mitigation.
\newblock {\em Journal of the Physical Society of Japan}, 90(3):032001, 2021.

\bibitem{barkoutsos2020improving}
P~Barkoutsos, G~Nannicini, A~Robert, I~Tavernelli, and S~Woerner.
\newblock Improving variational quantum optimization using cvar.
\newblock {\em Quantum}, 4:256, 2020.

\bibitem{ibmquantum}
\text{IBM Quantum}.
\newblock \url{https://quantum-computing.ibm.com/}, 2021.

\bibitem{west2001introduction}
D~West.
\newblock {\em Introduction to Graph Theory}, volume~2.
\newblock Prentice hall Upper Saddle River, 2001.

\bibitem{rahman2005complexities}
M~Rahman and M~Kaykobad.
\newblock Complexities of some interesting problems on spanning trees.
\newblock {\em Information Processing Letters}, 94(2):93--97, 2005.

\bibitem{Qiskit}
A~H{\'e}ctor et~al.
\newblock Qiskit: An open-source framework for quantum computing, 2019.

\end{thebibliography}
	
\end{document}